\documentclass[11pt]{article}
\usepackage{amsmath, amssymb, amsthm, mathtools}
\usepackage{hyperref}
\usepackage{natbib}
\usepackage[margin=1in]{geometry}

\newtheorem{theorem}{Theorem}[section]
\newtheorem{lemma}[theorem]{Lemma}

\newtheorem{corollary}[theorem]{Corollary}

\theoremstyle{definition}

\theoremstyle{remark}

\DeclareMathOperator{\OPT}{OPT}
\DeclareMathOperator{\SREV}{SREV}
\DeclareMathOperator{\BREV}{BREV}
\DeclareMathOperator{\SRev}{SRev}

\allowdisplaybreaks

\begin{document}
\title{Improved Revenue Guarantees for Selling Separately and Bundling \thanks{The main result was entirely obtained by Cogentic, an agentic framework for mathematical discovery, using an interval version of Gemini as the base model. The authors contextualized the findings and verified the proofs. The full exposition here is due to the authors aided by different AI models.

The following authors have additional affiliations beyond Google Research: Yang Cai (Yale University) and Vineet Gupta (Google DeepMind).}}
\author{
\parbox{0.95\textwidth}{\centering
Yang Cai,
Vineet Gupta,
Yanchen Jiang,
Christopher Liaw,\\[0.3em]
Aranyak Mehta,
Grigoris Velegkas,
Di~Wang
}\\[1.2em]
\normalsize Google Research\\[0.3em]
\small\texttt{\{caiy, vineet, yanchenjiang, cvliaw, aranyak, gvelegkas, wadi\}@google.com}
}
\date{\today}
\maketitle

\begin{abstract}
We study how much revenue a seller can lose by restricting attention to selling separately or grand bundling, in the setting of a single additive buyer with independent item values. Although revenue-optimal mechanisms can require lotteries and infinite menus, Babaioff, Immorlica, Lucier, and Weinberg showed that the better of these two simple formats always achieves a constant fraction of optimal revenue. We prove that
\[
    \OPT \le 3.52\max\{\SREV,\BREV\},
\]
where $\SREV$ and $\BREV$ are the optimal revenues from selling separately and grand bundling, respectively. This improves the previous best-known approximation factor of $5.2$ due to Ma and Simchi-Levi and narrows the gap to the known lower bound of $2$.
\end{abstract}

\section{Introduction}
How much revenue must a seller forgo to use a simple selling rule? For a single item and a single buyer, the answer is zero: an appropriately chosen take-it-or-leave-it price maximizes expected revenue~\cite{myerson1981}. With multiple items, even this basic selling problem changes fundamentally. A seller facing one additive buyer whose item values are independent may benefit from randomized allocations, and the optimal mechanism can require an infinite menu of lotteries~\cite{daskalakis2017}. Computing an exactly optimal mechanism is also computationally intractable in basic instances of this setting~\cite{daskalakis2014}. Thus, the difficulty of optimal selling arises even without competition among buyers, complementarities between items, or correlation between their values.

This contrast motivates the study of \emph{simple versus optimal mechanisms}: rather than characterizing the optimal mechanism, can a seller obtain a substantial fraction of its revenue using transparent, easily implemented selling rules? Approximation guarantees provide a way to quantify the tradeoff between simplicity and optimality. 

For an additive buyer, two particularly natural formats are \emph{separate selling}, which posts a price for each item and lets the buyer purchase any subset, and \emph{grand bundling}, which offers all items together at a single price. Write $\SREV$ and $\BREV$ for their respective optimal expected revenues. These formats use different ways to extract revenue: separate prices can accommodate heterogeneous item distributions, while bundling can exploit the aggregation of independent values. Neither format alone guarantees a constant fraction of optimal revenue for arbitrary independent item values~\cite{hart2012,babaioff2014}. The question is whether choosing between them is enough.

\paragraph{From logarithmic to constant-factor guarantees.}
Hart and Nisan~\cite{hart2012} initiated the study of approximation guarantees for these simple mechanisms, proving that separate selling achieves an $O(\log^2 n)$-approximation for $n$ independent items. Li and Yao~\cite{li2013} improved this bound to $O(\log n)$ and showed that grand bundling achieves a constant-factor approximation when the item values are also identically distributed. Babaioff, Immorlica, Lucier, and Weinberg~\cite{babaioff2014} then proved that, for arbitrary independent item values,
\[
    \OPT \le 6\max\{\SREV,\BREV\},
\]
where $\OPT$ is the optimal expected revenue over all incentive-compatible and individually rational mechanisms, including randomized ones. Their theorem established that choosing the better of two elementary deterministic formats suffices for a constant-factor guarantee, even though either format can perform poorly on its own. 

The simple-versus-optimal program has since reached much broader settings. Under the item-independence assumption, Cai and Zhao~\cite{cai2017} established constant-factor guarantees for multiple XOS buyers using sequential posted-price mechanisms with or without entry fees. Cai, Oikonomou, and Zhao~\cite{cai2022stoc} developed polynomial-time algorithms for computing approximately optimal simple mechanisms in this setting. Cai, Chen, and Wu~\cite{cai2023} established constant-factor equilibrium-revenue guarantees for suitably augmented simultaneous auctions with subadditive bidders. These developments demonstrate the scope of settings where simple mechanisms are provably competitive with respect to the optimal but complex one.

\paragraph{The quantitative gap.}
In the original single additive buyer setting, Ma and Simchi-Levi~\cite{ma2021}  improved the approximation factor to $5.2$. Their work first circulated in 2015 and was published in 2021. On the other hand, Rubinstein~\cite{rubinstein2016} constructed independent-item instances in which a mechanism that partitions the items into disjoint bundles earns $(2-o(1))\max\{\SREV,\BREV\}$. Consequently, no approximation factor below $2$ is possible for the better of separate selling and grand bundling. The gap between $2$ and $5.2$ leaves substantial uncertainty about the revenue cost of restricting attention to these two formats.

\paragraph{Our result.}
We substantially narrow this gap by proving the following guarantee.
\begin{quote}
\textbf{Main result (Theorem~\ref{thm:main}).}
\emph{For a single additive buyer with independent nonnegative item values,}
\[
    \OPT \le 3.52\max\{\SREV,\BREV\}.
\]
\end{quote}
The bound holds uniformly over the number of items and their value distributions; the distributions need not be identical or satisfy specific properties, e.g., MHR, or regularity. 

\subsection{Technical Overview}
Our analysis builds on the duality framework of Cai, Devanur, and Weinberg~\cite{cai2016}. Let $v_j$ denote the buyer's value for item $j$. We define the total welfare as $V=\sum_j v_j$, the maximum item value as $M_{\max}=\max_j v_j$, and the joint revenue scale as $M=\max\{\SREV,\BREV\}$. Their framework yields the upper bound
\[
    \OPT \le \SREV+\mathbb{E}[V-M_{\max}].
\]
The remaining task is to bound the expected \emph{non-favorite welfare}, $\mathbb{E}[V-M_{\max}]$, which represents the expected total value of all items except the buyer's favorite. We accomplish this through a capped-welfare reduction, followed by a comparison of two extremal second-moment bounds.

\paragraph{A direct reduction to capped welfare.}
We define the capped welfare as
\[
    V_M=\sum_{j=1}^n\min\{v_j,M\},
\]
and show that the expected non-favorite welfare is at most the expectation of this capped sum:
\[
    \mathbb{E}[V-M_{\max}]\le\mathbb{E}[V_M].
\]
While prior analyses typically bound the non-favorite welfare by truncating item values at $\SREV$, we depart from this paradigm by truncating at the joint scale $M$. Traditionally, when capped at $\SREV$, the expectation $\mathbb{E}[V - M_{\max}]$ is decomposed into a $\textsc{Core}$ (the capped welfare) and a $\textsc{Tail}$ (the remainder)~\citep{cai2016}. By factoring the bundle revenue into the truncation threshold, we can focus entirely on a $\textsc{Core}$-like quantity and bypass the $\textsc{Tail}$ analysis completely. Finally, we bound $\mathbb{E}[V_M]$ by comparing upper and lower bounds on its second moment, $\mathbb{E}[V_M^2]$, establishing that this expected truncated welfare is at most $2.5168 M$.

\subsection{Further Related Work}
\paragraph{Core--tail decompositions and duality.}
The core--tail decomposition introduced by Li and Yao~\cite{li2013} separates bounded values from rare, high values and is central to the analyses of Babaioff et al.~\cite{babaioff2014} and Ma and Simchi-Levi~\cite{ma2021}. The latter exploit the fact that the worst cases for their core and tail bounds cannot occur simultaneously, and their $5.2$ guarantee also accommodates production costs; the revenue guarantee discussed here is the zero-cost special case. Separately, the duality framework of Cai, Devanur, and Weinberg~\cite{cai2016} recovers the $6$-approximation. Cai, Devanur, Goldner, and McAfee~\cite{cai2019} refine that analysis to show
\[
    \OPT\le(1+a)\SREV+
    \left(2+\frac{2}{a^2}\right)\BREV
    \qquad\text{for every }a>0,
\]
which yields a $5.382$-approximation. Although numerically weaker than the $5.2$ bound, this duality-based analysis is the closest technical antecedent of our approach.

\paragraph{Other settings.}
Early work established revenue guarantees for unit-demand buyers~\cite{chawla2007,chawla2010}. Beyond the additive single-buyer setting, simple-mechanism guarantees have been established for a single subadditive buyer~\cite{rubinstein2015}, proportional complementarities~\cite{cai2019}, and dependent item values with guarantees parameterized by the strength of dependence~\cite{cai2021}. For multiple subadditive buyers, D\"utting, Kesselheim, and Lucier~\cite{dutting2020} obtained an $O(\log\log n)$-approximation using sequential two-part tariffs, where $n$ denotes the number of items. The constant-factor equilibrium-revenue guarantees for simultaneous auctions with subadditive bidders~\cite{cai2023} extend earlier guarantees for additive bidders~\cite{daskalakis2022}.

\section{Preliminaries}

We consider a single buyer with additive valuations over $n$ independently distributed items. Let $v_j \ge 0$ denote the buyer's private value for item $j \in \{1, \dots, n\}$. We define:
\begin{itemize}
    \item $\OPT$: the optimal expected revenue of any randomized Bayesian Incentive Compatible (BIC) and Ex-Post Individually Rational (IR) mechanism.
    \item $V = \sum_{j=1}^n v_j$ and $M_{\max} = \max_{1 \le j \le n} v_j$: the grand bundle value and the maximum single-item value.
    \item $\SRev_j = \sup_{p \ge 0} p \Pr(v_j > p)$ and $\SREV = \sum_{j=1}^n \SRev_j$: the optimal single-item revenue for item $j$ and the total revenue from selling items separately.
    \item $\BREV = \sup_{p \ge 0} p \Pr(V > p)$ and $M = \max\{\SREV, \BREV\}$: the optimal grand-bundle revenue and the benchmark revenue. Note that $\SREV \le M$ and $\BREV \le M$.
    \item $p_j(z) = \Pr(v_j > z)$ and $K(z) = \sum_{j=1}^n p_j(z)$: the survival function of item $j$ and the aggregate survival function. Since $z p_j(z) \le \SRev_j$ for all $z > 0$, summing over $j$ gives
    \[
        z K(z) \le \SREV \le M \implies K(z) \le \frac{M}{z} \quad \text{for all } z > 0.
    \]
\end{itemize}

\section{Proof Overview.}
 Our proof proceeds in four transparent steps:
\begin{enumerate}
    \item \textbf{Duality Decomposition (Lemma~\ref{lem:cdw}):} By the duality framework of Cai, Devanur, and Weinberg~\cite{cai2016}, any Bayesian Incentive Compatible (BIC) and Ex-Post Individually Rational (IR) mechanism satisfies
    \[
        \OPT \le \SREV + \mathbb{E}[V - M_{\max}].
    \]
    \item \textbf{Survival Integral Bound (Lemma~\ref{lem:survival}):} Expressing $V - M_{\max}$ in terms of the item survival probabilities $p_j(z) = \Pr(v_j > z)$ and their sum $K(z) = \sum_{j=1}^n p_j(z)$, item independence implies $\Pr(M_{\max} > z) \ge 1 - e^{-K(z)}$, which gives
    \[
        \mathbb{E}[V - M_{\max}] \le \int_0^\infty \bigl(K(z) - 1 + e^{-K(z)}\bigr)\,dz.
    \]
    \item \textbf{Reduction to the Truncated Sum (Lemma~\ref{lem:truncate}):} Because single-item pricing ensures $K(z) \le \SREV/z \le M/z$, the term $1 - e^{-K(z)}$ subtracted on $[0, M]$ is at least as large as the entire tail of the integral on $[M, \infty)$. Consequently,
    \[
        \int_0^\infty \bigl(K(z) - 1 + e^{-K(z)}\bigr)\,dz \le \int_0^M K(z)\,dz = \mathbb{E}[V_M],
    \]
    where $V_M = \sum_{j=1}^n \min(v_j, M)$ is the sum of item values truncated at $M$.
    \item \textbf{Second-Moment Tradeoff for $V_M$ (Section~\ref{sec:second-moment}):} To bound $E_M = \mathbb{E}[V_M]$, we analyze the second moment $S_2 = \mathbb{E}[V_M^2]$ from two directions:
    \begin{itemize}
        \item \emph{Upper bound (Lemma~\ref{lem:s2-upper}):} Since the truncated values $V_j = \min(v_j, M)$ are independent, their variances add. Subject to the single-item tail cap $\Pr(V_j > z) \le \SRev_j/z$, $\mathbb{E}[V_j^2]$ is maximized by pushing probability mass as far right as monotonicity allows. Aggregating across items via Jensen's inequality gives $S_2 \le E_M^2 + M^2 y(E_M/M)$, where $y(x) = 2 - e^{1-x}$ for $x > 1$.
        \item \emph{Lower bound (Lemma~\ref{lem:em-bound}):} Subject to the grand-bundle tail cap $\Pr(V_M > z) \le M/z$, a distribution with mean $E_M$ minimizes $S_2$ by packing probability mass as far left as possible, yielding $E_M \le M\bigl(1 + \ln\frac{S_2 + M^2}{2M^2}\bigr)$.
    \end{itemize}
    Combining the upper and lower bounds on $S_2$ yields a single-variable inequality in $x = E_M/M$, forcing $E_M \le 2.5168 M$ and hence $\OPT \le \SREV + 2.5168 M \le 3.5168 M$.
\end{enumerate}

\section{Reducing $\OPT$ to the Truncated Sum}\label{sec:reduction}

Our starting point is the standard decomposition of Cai, Devanur, and Weinberg~\cite{cai2016}, which bounds the optimal revenue by the sum of separate item revenues plus the expected value of all non-favorite items.

\begin{lemma}[Cai, Devanur, and Weinberg~\cite{cai2016}]\label{lem:cdw}
The expected revenue of any BIC and Ex-Post IR mechanism satisfies
\[
    \OPT \le \SREV + \mathbb{E}[V - M_{\max}].
\]
\end{lemma}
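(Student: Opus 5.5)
We will prove the bound through the Lagrangian duality framework of Cai, Devanur, and Weinberg: exhibit one flow (a set of dual variables) and bound the resulting virtual welfare. Write the problem as maximizing expected payment subject to the BIC and IR constraints. Lagrangify the BIC constraints with multipliers $\lambda(v,v')\ge 0$, one for each pair of types, and require the multipliers to form a flow from a super-source to the types. Any such flow turns revenue into virtual welfare. For every BIC, IR mechanism with allocation $\pi$,
\[
\text{Rev}\le \mathbb{E}_v\Bigl[\sum_j \pi_j(v)\Phi_j(v)\Bigr],
\]
where the virtual value $\Phi_j$ is determined by the flow.

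\textbf{Choice of flow.} Partition the type space into regions $R_j$. Region $R_j$ consists of the types whose favorite item is $j$, i.e.\ $v_j=M_{\max}$, with ties broken lexicographically. Inside $R_j$, the flow moves only along coordinate $j$, from $v$ down to $(v_j-\epsilon, v_{-j})$, just as in the single-item Myerson flow. Because coordinates are independent, the flow into any type of $R_j$ has weight $1-F_j(v_j)$ times the density of $v_{-j}$. This gives:
\begin{itemize}
\item for $v\in R_j$, the favorite coordinate gets the Myerson virtual value, $\Phi_j(v)=\varphi_j(v_j)=v_j-\frac{1-F_j(v_j)}{f_j(v_j)}$;
\item every other coordinate $k\ne j$ gets its raw value, $\Phi_k(v)=v_k$.
\end{itemize}
For general (non-regular) distributions, apply the same construction with ironed virtual values $\tilde\varphi_j$, using the standard ironing-by-flow argument. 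Alternatively, discretize the distributions and pass to the limit.

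\textbf{Bounding the virtual welfare.} Since $0\le\pi_j\le 1$, the virtual welfare is at most
\[
\mathbb{E}\Bigl[\sum_j \mathbf{1}[v\in R_j]\,\tilde\varphi_j(v_j)^+\,\pi_j(v)\Bigr]+\mathbb{E}\Bigl[\sum_j\sum_{k\ne j}\mathbf{1}[v\in R_j]\,v_k\Bigr].
\]
The second term equals $\mathbb{E}[V-M_{\max}]$. For the first term, pointwise in $v_{-j}$, the map $v_j\mapsto \pi_j(v)\mathbf{1}[v\in R_j]$ is the allocation rule of a truthful single-item mechanism for item $j$. Its allocation is monotone in $v_j$, and the region condition $v_j\ge \max_{k\neq j}v_k$ is also monotone in $v_j$. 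Myerson's theorem therefore bounds the $j$-th term by $\SRev_j$. Summing over $j$ gives $\SREV$.

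\textbf{Main obstacle.} The delicate step is verifying that the flow is valid and that the resulting virtual values are exactly as claimed. Two points need care. First, flow conservation must hold near the boundaries of the regions $R_j$, which is why flow never crosses between regions. Second, the monotonicity argument for the first term needs the combined allocation $\pi_j\mathbf{1}[R_j]$ to be monotone. A randomized BIC mechanism gives monotone $\pi_j$ in $v_j$ only in the interim sense, so here I would follow Cai, Devanur, and Weinberg: bound the term by the revenue of the one-dimensional mechanism induced for $v_j$, which is at most $\SRev_j$. Ironing for irregular distributions is a further technicality, handled by the standard flow-ironing lemma.
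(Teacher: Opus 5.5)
Your sketch is the Cai--Devanur--Weinberg flow argument that the paper cites without reproving: favorite-item regions $R_j$, flow only along the favorite coordinate, $\Phi_j=\varphi_j(v_j)$ on $R_j$ and raw values in the other coordinates, and $\pi\le 1$ to bound the non-favorite term by $\mathbb{E}[V-M_{\max}]$. It is correct, up to one simplification: for a single buyer, IC already makes $\pi_j(v)$ monotone in $v_j$ ex post, so the interim concern you raise does not arise. If you keep the unironed $\varphi_j(v_j)$ with no positive part, your monotonicity argument and Myerson's lemma bound the $j$-th term by $\SRev_j$ directly (after averaging over $v_{-j}$, using independence), and the ironing step can be dropped. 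If instead you use $\tilde\varphi_j(v_j)^+$, the bound $\pi_j(v)\mathbf{1}[v\in R_j]\le 1$ alone suffices and monotonicity is not needed.
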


Next, we express $\mathbb{E}[V - M_{\max}]$ in terms of the aggregate survival function $K(z)$.

\begin{lemma}\label{lem:survival}
$\mathbb{E}[V - M_{\max}] \le \int_0^\infty \bigl(K(z) - 1 + e^{-K(z)}\bigr)\,dz$.
\end{lemma}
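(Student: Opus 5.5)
We write both $V$ and $M_{\max}$ through layer-cake (tail-integral) formulas and compare them pointwise in $z$. Since all values are nonnegative, Fubini gives
\[
\mathbb{E}[V]=\sum_{j=1}^n \int_0^\infty \Pr(v_j>z)\,dz=\int_0^\infty K(z)\,dz,
\qquad
\mathbb{E}[M_{\max}]=\int_0^\infty \Pr(M_{\max}>z)\,dz .
\]
If $\mathbb{E}[V]=\infty$, we first note that $K(z)-1+e^{-K(z)}\ge 0$ for every $z$. We then handle this case either by truncating the values at a large level and taking a limit, or by subtracting the integrands pointwise before integrating. The cleanest route is
\[
\mathbb{E}[V-M_{\max}]=\mathbb{E}\Bigl[\int_0^\infty\bigl(\mathbf{1}[V>z]\text{-style sum}\bigr)\Bigr],
\]
that is, $V-M_{\max}=\int_0^\infty\bigl(\sum_j \mathbf{1}[v_j>z]-\mathbf{1}[M_{\max}>z]\bigr)\,dz$. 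The integrand here is nonnegative pointwise, so Tonelli applies and we obtain
\[
\mathbb{E}[V-M_{\max}]=\int_0^\infty\bigl(K(z)-\Pr(M_{\max}>z)\bigr)\,dz
\]
without any finiteness assumption.

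The key step is then the pointwise bound $\Pr(M_{\max}>z)\ge 1-e^{-K(z)}$. By independence of the items,
\[
\Pr(M_{\max}\le z)=\prod_{j=1}^n\Pr(v_j\le z)=\prod_{j=1}^n\bigl(1-p_j(z)\bigr)\le\prod_{j=1}^n e^{-p_j(z)}=e^{-K(z)},
\]
where we use $1-t\le e^{-t}$ for $t\in[0,1]$. Hence $K(z)-\Pr(M_{\max}>z)\le K(z)-1+e^{-K(z)}$ for every $z\ge 0$. Integrating this inequality over $z$ gives the claim.

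The only delicate point is the measure-theoretic justification for subtracting the two integrals when they may be infinite. The pointwise-nonnegative integrand $\sum_j\mathbf{1}[v_j>z]-\mathbf{1}[\max_j v_j>z]\ge 0$ resolves this, because at least one $v_j$ exceeds $z$ whenever $M_{\max}>z$. The rest is routine.
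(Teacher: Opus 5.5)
Your proof is correct and is essentially the paper's argument: you write $V-M_{\max}$ as the integral of the pointwise-nonnegative integrand $\sum_j \mathbf{1}[v_j>z]-\mathbf{1}[M_{\max}>z]$, apply Tonelli, and then use independence together with $1-t\le e^{-t}$ to get $\Pr(M_{\max}\le z)\le e^{-K(z)}$. The opening display and the truncation alternative are unnecessary once you take this route, and the phrase ``$\mathbf{1}[V>z]$-style sum'' should be replaced by the explicit integrand.
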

\begin{proof}
For any realization $v = (v_1, \dots, v_n) \in \mathbb{R}_{\ge 0}^n$, we can write $v_j = \int_0^\infty \mathbf{1}[v_j > z]\,dz$ and $M_{\max} = \int_0^\infty \mathbf{1}[M_{\max} > z]\,dz$. Thus,
\[
    V - M_{\max} = \int_0^\infty \left(\sum_{j=1}^n \mathbf{1}[v_j > z] - \mathbf{1}[M_{\max} > z]\right) dz.
\]
Notice that if $M_{\max} > z$, then at least one $v_j > z$, so the integrand $\sum_{j=1}^n \mathbf{1}[v_j > z] - \mathbf{1}[M_{\max} > z]$ is pointwise non-negative for every $z \ge 0$. Applying Tonelli's theorem to exchange expectation and integration yields
\[
    \mathbb{E}[V - M_{\max}] = \int_0^\infty \bigl(K(z) - \Pr(M_{\max} > z)\bigr)\,dz.
\]
By independence of the items and the inequality $1 - x \le e^{-x}$, we have
\[
    \Pr(M_{\max} \le z) = \prod_{j=1}^n \bigl(1 - p_j(z)\bigr) \le \prod_{j=1}^n e^{-p_j(z)} = e^{-K(z)},
\]
so $\Pr(M_{\max} > z) \ge 1 - e^{-K(z)}$. Substituting this into the integral completes the proof.
\end{proof}

We now show that the improper integral in Lemma~\ref{lem:survival} is bounded above by the integral of $K(z)$ truncated at $M$. Intuitively, the negative correction $-(1 - e^{-K(z)})$ on the interval $[0, M]$ more than compensates for the entire tail of the integral on $[M, \infty)$.

\begin{lemma}\label{lem:truncate}
For any non-increasing function $K(z) \ge 0$ satisfying $K(z) \le M/z$,
\[
    \int_0^\infty \bigl(K(z) - 1 + e^{-K(z)}\bigr)\,dz \le \int_0^M K(z)\,dz.
\]
\end{lemma}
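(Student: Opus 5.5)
Writing $g(K) = K - 1 + e^{-K}$, the claim is equivalent to
\[
\int_M^\infty g(K(z))\,dz \le \int_0^M \bigl(1 - e^{-K(z)}\bigr)\,dz .
\]
This is obtained by splitting the left integral at $M$ and moving terms. The proof compares the tail on $[M,\infty)$ with the correction on $[0,M]$ using only $K(z)\le M/z$ and monotonicity.

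\emph{Bounding the tail.} For $K\ge 0$ we have $g(K)\le K^2/2$. On $[M,\infty)$ the cap gives $K(z)\le M/z$, so $g(K(z))\le g(M/z)$, since $g$ is increasing. Substituting $u = M/z$ then gives
\[
\int_M^\infty g(K(z))\,dz \le M\int_0^1 \frac{g(u)}{u^2}\,du .
\]
This is an explicit constant times $M$. Numerically $\int_0^1 (u-1+e^{-u})/u^2\,du\approx 0.21$, and the crude bound $\int_0^1 \tfrac12\,du=\tfrac12$ also works.

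\emph{Bounding the correction.} Since $K$ is non-increasing, for $z\in[0,M]$ we have $K(z)\ge K(M)$. However, $K(M)$ could be $0$, and then the tail also vanishes. So the tail must be compared with $K$ itself, not with the worst-case cap. Write $k = K(M)\le 1$. Monotonicity on the tail gives $K(z)\le \min\{k, M/z\}$ for $z\ge M$. Monotonicity on $[0,M]$ gives $K(z)\ge k$, so the correction is at least $M(1-e^{-k})$. It therefore suffices to prove, for each $k\in[0,1]$,
\[
\int_M^\infty g\bigl(\min\{k, M/z\}\bigr)\,dz \le M\,(1-e^{-k}).
\]
After scaling $z = Mt$, this reads
\[
(t_0 - 1)\,g(k) + \int_{t_0}^\infty g(1/t)\,dt \le 1 - e^{-k}, \qquad t_0 = 1/k .
\]

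\emph{Verifying the one-variable inequality.} This step is the main obstacle, though I expect it to be routine. Using $g(u)\le u^2/2$, the left side is at most $(1/k - 1)k^2/2 + k/2 = k - k^2/2$. The right side satisfies $1-e^{-k}\ge k - k^2/2$, by the alternating Taylor bound. Hence the inequality holds for all $k\in[0,1]$, with equality only as $k\to 0$, which is consistent with the claim. Finally, the interchange of integrals and the rearrangement into the claimed form are justified because all integrands are nonnegative and the tail is finite.
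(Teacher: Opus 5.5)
Your proof is correct and is essentially the paper's argument: split at $M$, set $B=K(M)\le 1$, lower-bound the correction on $[0,M]$ by $M(1-e^{-B})$, and bound the tail using $K(z)\le\min\{B,M/z\}$. The only difference is cosmetic. You close the one-variable inequality with $g(u)\le u^2/2$ and $1-e^{-k}\ge k-k^2/2$, whereas the paper integrates by parts and uses $1-e^{-t}\le t$ to get the tail bound $M(1-e^{-B})$ directly. Your preliminary numerical aside is unused and can be dropped; the value $\approx 0.21$ given there is actually about $0.43$.
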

\begin{proof}
Let $f(t) = t - 1 + e^{-t}$ for $t \ge 0$, and note that $f'(t) = 1 - e^{-t} \ge 0$, so $f$ is non-decreasing. Let $B = K(M) \in [0, 1]$. We split the integral at $z = M$:
\begin{itemize}
    \item \textbf{On $[0, M]$:} Since $K(z)$ is non-increasing, $K(z) \ge K(M) = B$. Thus $1 - e^{-K(z)} \ge 1 - e^{-B}$, which implies $f(K(z)) \le K(z) - (1 - e^{-B})$. Integrating over $[0, M]$ gives
    \begin{equation}\label{eq:head}
        \int_0^M f(K(z))\,dz \le \int_0^M K(z)\,dz - M(1 - e^{-B}).
    \end{equation}
    \item \textbf{On $[M, \infty)$:} Since $K(z) \le \min(B, M/z)$ and $f$ is non-decreasing, we have
    \[
        \int_M^\infty f(K(z))\,dz \le \int_M^{M/B} f(B)\,dz + \int_{M/B}^\infty f\left(\frac{M}{z}\right) dz = M\left(\frac{1}{B} - 1\right)f(B) + M \int_0^B \frac{f(t)}{t^2}\,dt,
    \]
    where the second integral uses the substitution $t = M/z$. Integrating $\int_0^B f(t)t^{-2}\,dt$ by parts (noting $\lim_{t \to 0} f(t)/t = 0$) and using $f'(t) = 1 - e^{-t} \le t$ yields
    \[
        \int_0^B \frac{f(t)}{t^2}\,dt = \left[-\frac{f(t)}{t}\right]_0^B + \int_0^B \frac{1 - e^{-t}}{t}\,dt \le -\frac{f(B)}{B} + B.
    \]
    Substituting this back, the $f(B)/B$ terms cancel cleanly:
    \begin{equation}\label{eq:tail}
        \int_M^\infty f(K(z))\,dz \le M\left(\frac{f(B)}{B} - f(B) - \frac{f(B)}{B} + B\right) = M\bigl(B - f(B)\bigr) = M(1 - e^{-B}).
    \end{equation}
\end{itemize}
Summing~\eqref{eq:head} and~\eqref{eq:tail} cancels $M(1 - e^{-B})$ and proves the lemma.
\end{proof}

Define the truncated item values $V_j = \min(v_j, M)$ and their sum $V_M = \sum_{j=1}^n V_j$. Since $\mathbb{E}[V_j] = \int_0^M \Pr(v_j > z)\,dz = \int_0^M p_j(z)\,dz$, linearity of expectation gives
\[
    \int_0^M K(z)\,dz = \sum_{j=1}^n \mathbb{E}[V_j] = \mathbb{E}[V_M].
\]
Combining Lemmas~\ref{lem:cdw}, \ref{lem:survival}, and \ref{lem:truncate} therefore yields:
\begin{corollary}\label{cor:reduction}
$\OPT \le \SREV + \mathbb{E}[V_M]$.
\end{corollary}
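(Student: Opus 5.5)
The corollary follows by chaining the three lemmas above and then rewriting the resulting integral as an expectation.

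First, Lemma~\ref{lem:cdw} gives $\OPT \le \SREV + \mathbb{E}[V - M_{\max}]$. Lemma~\ref{lem:survival} then bounds the second term:
\[
\mathbb{E}[V - M_{\max}] \le \int_0^\infty \bigl(K(z) - 1 + e^{-K(z)}\bigr)\,dz .
\]

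Second, I apply Lemma~\ref{lem:truncate} with $K(z)=\sum_j p_j(z)$. I need to confirm its two hypotheses.
\begin{itemize}
\item $K$ is non-negative and non-increasing, because each survival function $p_j(z)=\Pr(v_j>z)$ is.
\item $K(z)\le M/z$ for all $z>0$. This was derived in the preliminaries from $z\,p_j(z)\le \SRev_j$, summed over $j$, together with $\SREV\le M$.
\end{itemize}
One detail concerns the argument's use of $B=K(M)\in[0,1]$. The bound $B\le 1$ follows from $K(M)\le M/M=1$ when $M>0$. If $M=0$, then $\SREV=0$, so every $v_j=0$ almost surely and everything is trivial. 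With the hypotheses checked, the integral is bounded by $\int_0^M K(z)\,dz$.

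Third, I identify $\int_0^M K(z)\,dz$ with $\mathbb{E}[V_M]$. For each $j$, the layer-cake formula gives
\[
\mathbb{E}[\min(v_j,M)]=\int_0^\infty \Pr(\min(v_j,M)>z)\,dz=\int_0^M p_j(z)\,dz,
\]
since $\Pr(\min(v_j,M)>z)=0$ for $z\ge M$. Summing over $j$ by linearity of expectation (a finite sum, so exchanging sum and integral is immediate) gives $\int_0^M K(z)\,dz=\mathbb{E}[V_M]$.

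Substituting back into the first display yields $\OPT\le \SREV+\mathbb{E}[V_M]$. There is no real obstacle here. The only points needing care are checking the hypotheses of Lemma~\ref{lem:truncate}, especially $K(M)\le 1$, and the degenerate case $M=0$.
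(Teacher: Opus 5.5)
Your proposal is correct and follows the paper's argument: chain Lemmas~\ref{lem:cdw}, \ref{lem:survival}, and~\ref{lem:truncate}, then identify $\int_0^M K(z)\,dz$ with $\mathbb{E}[V_M]$ using the layer-cake formula and linearity of expectation. You also explicitly check the hypotheses of Lemma~\ref{lem:truncate} (monotonicity of $K$ and $K(M)\le 1$) and the degenerate case $M=0$, which the paper leaves implicit.
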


\section{Bounding the Truncated Sum via Second Moments}\label{sec:second-moment}

To complete the proof, it remains to upper bound $E_M = \mathbb{E}[V_M]$ in terms of $M$. We do so by deriving both an upper bound and a lower bound on the second moment $S_2 = \mathbb{E}[V_M^2]$.

To derive the following upper bound, we fix the expectation $\mathbb{E}[V_M]$ and identify the \emph{extremal} distribution that maximizes the second moment $\mathbb{E}[V_M^2]$.
\begin{lemma}[Upper Bound on $S_2$]\label{lem:s2-upper}
Let $E_M = \mathbb{E}[V_M]$ and $S_2 = \mathbb{E}[V_M^2]$. Then
\[
    S_2 \le E_M^2 + M^2 y\left(\frac{E_M}{M}\right), \quad \text{where } y(x) = \begin{cases} x & \text{if } 0 \le x \le 1, \\ 2 - e^{1-x} & \text{if } x > 1. \end{cases}
\]
\end{lemma}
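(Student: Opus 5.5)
The plan is to use independence to reduce the bound to a statement about each item separately, to solve each single-item problem by an extremal "push mass right" argument, and then to aggregate the items by a concavity/Jensen step. Since the truncated values $V_j=\min(v_j,M)$ are independent,
\[
S_2 = E_M^2 + \sum_{j=1}^n \mathrm{Var}(V_j), \qquad \mathrm{Var}(V_j) = \int_0^M 2z\,p_j(z)\,dz - m_j^2 ,
\]
where $m_j=\mathbb{E}[V_j]=\int_0^M p_j(z)\,dz$. So it suffices to show $\sum_j \mathrm{Var}(V_j)\le M^2 y(E_M/M)$ with $E_M=\sum_j m_j$. The only constraints I will use are:
\begin{itemize}
\item $p_j$ is non-increasing;
\item $p_j(z)\le \min\{1,\SRev_j/z\}$;
\item $\sum_j \SRev_j=\SREV\le M$.
\end{itemize}

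\textbf{Case $E_M\le M$.} This case is immediate. Since $0\le V_j\le M$, we have $\mathrm{Var}(V_j)\le \mathbb{E}[V_j^2]\le M\,\mathbb{E}[V_j]$. Summing over $j$ gives $\sum_j\mathrm{Var}(V_j)\le M E_M = M^2 x$ with $x=E_M/M$, which is $M^2y(x)$ on $[0,1]$. The same bound is valid for all $x$, and it is at least $2-e^{1-x}$ when $x$ is near $1$. I will therefore only need the refined bound below when $x>1$.

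\textbf{Per-item extremal problem.} Fix $r=\SRev_j$ and a mean $m$. I will maximize $\int_0^M 2z\,q(z)\,dz$ over non-increasing $q$ with $q\le\min\{1,r/z\}$ and $\int_0^M q=m$. Since $2z$ is increasing, an exchange argument moves survival mass as far right as the cap permits. This should give the extremal profile $q^*(z)=\min\{1,r/z\}$ on $[0,M]$, possibly cut off at the left if $m$ is smaller than the full cap allows. For the full cap the values are:
\begin{itemize}
\item mean $m(r)=r\bigl(1+\ln(M/r)\bigr)$;
\item second moment $r^2+2r(M-r)=2rM-r^2$;
\item variance $g(r)=2rM-r^2-r^2\bigl(1+\ln(M/r)\bigr)^2$.
\end{itemize}

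\textbf{Aggregation.} I then parametrize each item by its "effective rate" $r_j$ and aggregate using $\sum_j r_j\le M$ together with $\sum_j m(r_j)=E_M$. Heuristically, with $n$ equal items $r_j=M/n$, the mean condition $M(1+\ln n)=xM$ forces $n=e^{x-1}$. The total variance is then $2M^2-M^2/n-(\text{nonnegative terms})\le M^2\bigl(2-e^{1-x}\bigr)$, which exactly matches $y(x)$ for $x>1$. The general case should follow from Jensen's inequality, via concavity of the appropriate per-item function. Concretely, I will write $\mathrm{Var}(V_j)\le M^2 h(m_j/M, r_j/M)$ for a jointly concave $h$, or equivalently bound each variance by the tangent line of $2-e^{1-x}$ at the aggregate point, and sum.

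\textbf{Main obstacle.} The hard step is making this aggregation rigorous. I must show that spreading the budget $\sum r_j\le M$ equally across items is worst-case, and handle items whose mean is below the full cap $m(r_j)$. My first attempt will be to bound each item's variance by an affine function of $(m_j,r_j)$, namely
\[
\mathrm{Var}(V_j)\le \alpha\, m_j M+\beta\, r_j M \qquad\text{with } \alpha=e^{1-x},\ \beta=2-\alpha(2+\ldots),
\]
chosen as a supporting hyperplane at the equal-split optimum. Summing this inequality and using $\sum m_j=E_M$ and $\sum r_j\le M$ then yields the claimed bound.
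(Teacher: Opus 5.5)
Your architecture (independence to split into per-item terms, a per-item ``push mass right'' extremal problem, then concave aggregation) matches the paper's, and your case $E_M\le M$ is fine. But the step that actually proves the lemma is missing. You never establish a per-item bound valid for an \emph{arbitrary} mean $m_j$, stated in terms of the item's actual revenue $\SRev_j$. You compute only the full-cap profile $\min\{1,\SRev_j/z\}$, which corresponds to the single mean $m(\SRev_j)=\SRev_j(1+\ln(M/\SRev_j))$. The phrase ``possibly cut off at the left'' is never made precise; zeroing $q$ out on the left would violate monotonicity. What is needed is the following.
For $m_j\le \SRev_j$, one has $\mathbb{E}[V_j^2]\le Mm_j$. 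For $\SRev_j<m_j\le m(\SRev_j)$, the maximizer is a plateau at height $\SRev_j/z_0$ on $[0,z_0]$ followed by the ceiling $\SRev_j/z$ on $(z_0,M]$, with $z_0=Me^{1-m_j/\SRev_j}\in[\SRev_j,M)$. Its optimality follows from a single-crossing argument against any feasible $q$: the difference changes sign once and the means agree. This gives $\mathbb{E}[V_j^2]\le \SRev_j(2M-z_0)$. Together the two cases say $\mathbb{E}[V_j^2]\le M\,\SRev_j\,y(m_j/\SRev_j)$.

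Your ``effective rate'' device cannot substitute for this. Suppose $r_j$ is defined by $m(r_j)=m_j$, as your condition $\sum_j m(r_j)=E_M$ indicates. Then crediting item $j$ with the full-cap moments of rate $r_j$ is false. Take $M=1$, $\SRev_j=0.1$, $m_j=0.11$: the plateau profile above is a legitimate distribution with $\mathbb{E}[V_j^2]\approx 0.1095$ and variance $\approx 0.097$. Meanwhile $r_j\approx 0.023$, so $2r_jM-r_j^2\approx 0.046$, and your affine bound at $x=2$ gives only about $0.061$. If instead $r_j=\SRev_j$, your affine inequality with $\alpha=e^{1-x}$ and $\beta=2-(1+x)e^{1-x}$ is correct. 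However, it holds only as a consequence of the missing per-item bound combined with the tangent-line inequality $y(t)\le \alpha t+\beta$ for the concave, $C^1$ function $y$. Once the per-item bound is in hand, no ``equal split is worst case'' argument is needed, because $(m,r)\mapsto r\,y(m/r)$ is the perspective of a concave function. Summing the tangent-line bounds, using $\sum_j m_j=E_M$, $\sum_j\SRev_j\le M$ and $\beta\ge 0$, gives $\sum_j\mathbb{E}[V_j^2]\le M^2y(E_M/M)$. Equivalently, as in the paper, apply Jensen with weights $\SRev_j/M$ plus a dummy weight $(M-\SREV)/M$ at $y(0)=0$. Then $\operatorname{Var}(V_j)\le\mathbb{E}[V_j^2]$ finishes the proof.
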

\begin{proof}
Let $S_j = \SRev_j$ and $E_j = \mathbb{E}[V_j]$. Let $G_j(z) = \Pr(V_j > z)$ be the survival function of $V_j \in [0, M]$. Our first goal is to upper bound the second moment of each item,
\[
    \mathbb{E}[V_j^2] = \int_0^M 2z \, G_j(z)\,dz,
\]
subject to three constraints on $G_j : [0, M] \to [0, 1]$:
\begin{enumerate}
    \item[(i)] \emph{Fixed mean (area):} $\int_0^M G_j(z)\,dz = E_j$,
    \item[(ii)] \emph{Single-item revenue cap (ceiling):} $G_j(z) \le \min(1, S_j/z)$ for all $z \in (0, M]$, and
    \item[(iii)] \emph{Monotonicity:} $G_j(z)$ is non-increasing in $z$.
\end{enumerate}
Because the weight $2z$ in $\int_0^M 2z \, G_j(z)\,dz$ is strictly increasing in $z$, maximizing $\mathbb{E}[V_j^2]$ for a fixed area $E_j$ requires pushing the area under $G_j(z)$ as far to the right (toward $M$) as constraints~(ii) and~(iii) allow. We analyze two cases depending on whether the ceiling $S_j/z$ binds:

\begin{itemize}
    \item \textbf{Case 1 ($E_j \le S_j$):} In this case, the ceiling $S_j/z$ never binds: a flat rectangle of constant height $E_j/M \le S_j/M \le S_j/z$ across $[0, M]$ already achieves area $E_j$. Pointwise, since $0 \le V_j \le M$, we immediately have $V_j^2 \le M V_j$, and taking expectations gives
    \[
        \mathbb{E}[V_j^2] \le M E_j = M S_j \left(\frac{E_j}{S_j}\right) = M S_j y\left(\frac{E_j}{S_j}\right).
    \]

    \item \textbf{Case 2 ($E_j > S_j$):} When $E_j > S_j$, we cannot shift all area to $z = M$ because the ceiling $G_j(z) \le S_j/z$ caps the height near $M$. Suppose we pack area tightly against the ceiling $S_j/z$ on some rightmost interval $[z_0, M]$. At $z = z_0$, the curve has height $S_j/z_0$, so the monotonicity constraint~(iii) forces $G_j(z) \ge S_j/z_0$ for all $z \in [0, z_0]$. To avoid wasting area on the left where the weight $2z$ is small, the extremal survival function $\tilde{G}_j$ stays as low as monotonicity allows on $[0, z_0]$—a flat plateau at height $S_j/z_0$—and hugs the ceiling $S_j/z$ on $(z_0, M]$:
    \[
        \tilde{G}_j(z) = \begin{cases} S_j / z_0 & \text{for } z \in [0, z_0], \\ S_j / z & \text{for } z \in (z_0, M]. \end{cases}
    \]
    Notice that the flat rectangle on $[0, z_0]$ always has area $z_0 (S_j/z_0) = S_j$, so the remaining area $E_j - S_j > 0$ must come from the tail $\int_{z_0}^M (S_j/z)\,dz = S_j \ln(M/z_0)$. Equating total area to $E_j$ uniquely determines the threshold $z_0$:
    \[
        \int_0^M \tilde{G}_j(z)\,dz = S_j + S_j \ln\left(\frac{M}{z_0}\right) = E_j \implies z_0 = M e^{1 - E_j/S_j}.
    \]
    (Since $E_j = \int_0^M G_j(z)\,dz \le \int_0^{S_j} 1\,dz + \int_{S_j}^M (S_j/z)\,dz = S_j(1 + \ln(M/S_j))$, we have $z_0 \in [S_j, M)$, so $\tilde{G}_j(z) \le 1$ is a valid survival function.)

    To prove formally that $\tilde{G}_j$ maximizes the second moment, we observe that $G_j$ and $\tilde{G}_j$ satisfy a single-crossing property:
    \begin{itemize}
        \item On $[0, z_0]$, $G_j(z)$ is non-increasing while $\tilde{G}_j(z) = S_j/z_0$ is constant, so $G_j(z) - \tilde{G}_j(z)$ is non-increasing and can cross zero from positive to negative at most once, at some point $z^* \in [0, z_0]$.
        \item On $(z_0, M]$, the ceiling constraint gives $G_j(z) \le S_j/z = \tilde{G}_j(z)$ everywhere.
    \end{itemize}
    Thus $G_j(z) \ge \tilde{G}_j(z)$ for $z < z^*$ and $G_j(z) \le \tilde{G}_j(z)$ for $z > z^*$, meaning that $\tilde{G}_j$ is obtained from $G_j$ by shifting area from left of $z^*$ to right of $z^*$ where the weight $2z$ is larger. Algebraically, $(z - z^*)(G_j(z) - \tilde{G}_j(z)) \le 0$ for all $z \in [0, M]$. Since $\int_0^M (G_j(z) - \tilde{G}_j(z))\,dz = E_j - E_j = 0$, we obtain
    \[
        \int_0^M 2z\bigl(G_j(z) - \tilde{G}_j(z)\bigr)\,dz = \int_0^M 2(z - z^*)\bigl(G_j(z) - \tilde{G}_j(z)\bigr)\,dz \le 0.
    \]
    Evaluating the second moment of $\tilde{G}_j$ directly gives the desired item bound:
    \[
        \mathbb{E}[V_j^2] \le \int_0^M 2z\,\tilde{G}_j(z)\,dz = \int_0^{z_0} 2z\left(\frac{S_j}{z_0}\right) dz + \int_{z_0}^M 2z\left(\frac{S_j}{z}\right) dz = S_j(2M - z_0) = M S_j y\left(\frac{E_j}{S_j}\right).
    \]
\end{itemize}

Finally, we aggregate across the $n$ items. Because $V_1, \dots, V_n$ are independent, their variances add: $\operatorname{Var}(V_M) = \sum_{j=1}^n \operatorname{Var}(V_j) \le \sum_{j=1}^n \mathbb{E}[V_j^2]$, and hence
\[
    S_2 = E_M^2 + \operatorname{Var}(V_M) \le E_M^2 + \sum_{j=1}^n M S_j y\left(\frac{E_j}{S_j}\right).
\]
Observe that $y$ is concave on $[0, \infty)$ with $y(0) = 0$. Defining non-negative weights $w_j = S_j/M$ for $j = 1, \dots, n$ and $w_0 = (M - \SREV)/M \ge 0$ (which sum to $1$), Jensen's inequality gives
\[
    \sum_{j=1}^n \frac{S_j}{M} y\left(\frac{E_j}{S_j}\right) = w_0 y(0) + \sum_{j=1}^n w_j y\left(\frac{E_j}{S_j}\right) \le y\left(w_0 \cdot 0 + \sum_{j=1}^n w_j \frac{E_j}{S_j}\right) = y\left(\frac{E_M}{M}\right).
\]
Multiplying by $M^2$ yields $S_2 \le E_M^2 + M^2 y(E_M/M)$.
\end{proof}

Symmetrically, to derive the following lower bound, we fix the
expectation $\mathbb{E}[V_M]$ and identify the extremal distribution
that \emph{minimizes} the second moment $\mathbb{E}[V_M^2]$.

\begin{lemma}[Lower Bound on $S_2$]\label{lem:s2-lower}
Let $E_M = \mathbb{E}[V_M]$ and $S_2 = \mathbb{E}[V_M^2]$.
If $E_M \ge M$, then
\[
  S_2 \;\ge\; M^2\bigl(2e^{E_M/M - 1} - 1\bigr),
  \quad\text{or equivalently,}\quad
  E_M \;\le\; M\!\left(1 + \ln\!\left(
      \frac{S_2 + M^2}{2M^2}\right)\right).
\]
\end{lemma}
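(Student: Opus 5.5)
The plan is to mirror the upper-bound argument of Lemma~\ref{lem:s2-upper}, this time working with the survival function of the aggregate $V_M$ itself. Let $H(z) = \Pr(V_M > z)$ for $z \ge 0$. Then $E_M = \int_0^\infty H(z)\,dz$ and $S_2 = \int_0^\infty 2z\,H(z)\,dz$. The only constraints needed are that $H$ is non-increasing with values in $[0,1]$, and the bundle ceiling. The ceiling follows because $V_M \le V$ pointwise, so $\Pr(V_M > z) \le \Pr(V > z)$. For $z > 0$ this gives $z\,H(z) \le z\Pr(V>z) \le \BREV \le M$, i.e. $H(z) \le \min(1, M/z)$.

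Now fix the area $E_M \ge M$ and minimize $\int 2z H$. Since the weight $2z$ is increasing, the extremal $H$ pushes mass as far left as the ceiling allows. The candidate is $\hat H(z) = \min(1, M/z)$ on $[0, z_1]$ and $\hat H = 0$ beyond $z_1$, with $z_1 \ge M$ chosen so that
\[
  \int_0^{z_1}\hat H = M + M\ln(z_1/M) = E_M.
\]
This gives $z_1 = M e^{E_M/M - 1}$, and $z_1 \ge M$ holds precisely because $E_M \ge M$, which is where the hypothesis enters. Computing directly,
\[
  \int_0^{z_1} 2z\,\hat H(z)\,dz = M^2 + 2M(z_1 - M) = 2Mz_1 - M^2 = M^2\bigl(2e^{E_M/M-1}-1\bigr).
\]

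The optimality step is a single-crossing argument, the reverse of Case~2 in Lemma~\ref{lem:s2-upper}. On $[0,z_1]$ we have $H \le \hat H$ by the ceiling constraint. On $(z_1,\infty)$ we have $H \ge 0 = \hat H$. So $(z - z_1)\bigl(H(z) - \hat H(z)\bigr) \ge 0$ everywhere. Both functions have area $E_M$, so
\[
  \int 2z\,(H - \hat H)\,dz = \int 2(z - z_1)(H - \hat H)\,dz \ge 0,
\]
which gives $S_2 \ge M^2(2e^{E_M/M-1}-1)$. This direction needs no monotonicity of $H$ at all, which makes it cleaner than the upper bound. The equivalent form then follows by rearranging and taking logarithms, since $\ln$ is increasing: $(S_2+M^2)/(2M^2) \ge e^{E_M/M-1}$.

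I expect the main obstacle to be integrability rather than the extremal argument itself. Because $V_M \le nM$ is bounded, $S_2$ is finite and the exchange $\int 2(z-z_1)(H-\hat H)$ is legitimate. The difference $H-\hat H$ is supported on a bounded set, so splitting the integral and using equal areas poses no convergence issue. The step worth double-checking is that the ceiling uses $\BREV$ via $V_M \le V$, and not $\SREV$.
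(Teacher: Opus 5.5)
Your proposal is correct and follows essentially the same argument as the paper. You use the same extremal survival function $\min(1,M/z)$ on $[0,z_1]$ with $z_1=Me^{E_M/M-1}$, the same single-crossing identity $\int 2z(H-\hat H)\,dz=\int 2(z-z_1)(H-\hat H)\,dz\ge 0$, and the same evaluation $2Mz_1-M^2$. Your explicit derivation of the ceiling via $V_M\le V$ and $z\Pr(V>z)\le \BREV\le M$, and your integrability remark ($V_M\le nM$), usefully fill in details the paper only asserts with ``as before.''
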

\begin{proof}
Let $F(z)=\Pr(V_M>z)$ be the survival function of~$V_M$.
As before, $F$ satisfies the grand-bundle ceiling
\[
  0 \;\le\; F(z) \;\le\; \min\!\left(1,\,\frac{M}{z}\right)
  \quad\text{for all } z>0.
\]
For a fixed expectation $E_M = \int_0^\infty F(z)\,dz$,
we seek to minimize $S_2 = \int_0^\infty 2z\,F(z)\,dz$
subject to this ceiling.
Since placing a unit of area $F(z)\,dz$ at location~$z$
costs~$2z$ toward~$S_2$ but always contributes~$1$ toward~$E_M$,
the cheapest locations are near $z=0$.
Minimizing~$S_2$ for a fixed~$E_M$ therefore requires packing
area as far to the left as the ceiling allows---the exact
opposite of Lemma~\ref{lem:s2-upper}, which pushed area to the
right to maximize~$S_2$.

The extremal survival function saturates the ceiling from the
left up to a cutoff $z_1\ge M$ where the total area reaches~$E_M$:
\[
  F^*(z)=\begin{cases}
    1   & \text{for } 0\le z\le M,\\
    M/z & \text{for } M<z\le z_1,\\
    0   & \text{for } z>z_1,
  \end{cases}
  \qquad\text{where }
  z_1 = M e^{E_M/M-1},
\]
since
$\int_0^\infty F^*(z)\,dz = M + M\ln(z_1/M) = E_M$.
To verify that $F^*$ achieves the minimum, observe that
$F(z)\le F^*(z)$ on $[0,z_1]$
(where $F^*$ equals the ceiling) and
$F(z)\ge F^*(z)=0$ on $(z_1,\infty)$.
Hence $(z-z_1)\bigl(F(z)-F^*(z)\bigr)\ge 0$ everywhere.
Since both functions share the same expectation,
$\int_0^\infty(F-F^*)\,dz=0$, and so
\[
  \int_0^\infty 2z\bigl(F-F^*\bigr)\,dz
  \;=\;\int_0^\infty 2(z-z_1)\bigl(F-F^*\bigr)\,dz
  \;\ge\; 0.
\]
Evaluating the second moment of~$F^*$ directly gives the
desired bound:
\[
  S_2
  \;\ge\;\int_0^\infty 2z\,F^*(z)\,dz
  \;=\; M^2+2M(z_1-M)
  \;=\; M^2\bigl(2e^{E_M/M-1}-1\bigr).
  \qedhere
\]
\end{proof}

Combining the upper and lower bounds on~$S_2$ from
Lemmas~\ref{lem:s2-upper} and~\ref{lem:s2-lower} now allows us to upper bound $E_M$.
\begin{lemma}[Upper Bound on $E_M$]\label{lem:em-bound}
The expected truncated sum satisfies $E_M \le 2.5168\,M$.
\end{lemma}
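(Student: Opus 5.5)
We combine Lemmas~\ref{lem:s2-upper} and~\ref{lem:s2-lower} into one inequality in the single variable $x = E_M/M$.

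If $x \le 1$, the claim $E_M \le 2.5168\,M$ already holds, so assume $x > 1$. Then Lemma~\ref{lem:s2-lower} applies, and Lemma~\ref{lem:s2-upper} is in the branch $y(x) = 2 - e^{1-x}$. Dividing both bounds on $S_2$ by $M^2$ gives
\[
  2e^{x-1} - 1 \;\le\; \frac{S_2}{M^2} \;\le\; x^2 + 2 - e^{1-x}.
\]
Every feasible $x > 1$ must therefore satisfy $h(x) \le 0$, where
\[
  h(x) = 2e^{x-1} + e^{1-x} - x^2 - 3 .
\]

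Next we show that the feasible set $\{x > 1 : h(x) \le 0\}$ is an interval $(1, x^*]$ with $x^* \approx 2.5168$.

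\textbf{Behaviour near $x = 1$.} We have $h(1) = 0$ and $h'(1) = 2 - 1 - 2 = -1 < 0$, so $h$ is negative just to the right of $1$. This is consistent with $x$ slightly above $1$ being feasible.

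\textbf{Shape of $h$.} Differentiating,
\[
  h''(x) = 2e^{x-1} + e^{1-x} - 2, \qquad h'''(x) = 2e^{x-1} - e^{1-x} > 0 \ \text{for } x \ge 1 .
\]
So $h''$ is increasing on $[1,\infty)$ and $h''(1) = 1 > 0$, which makes $h$ strictly convex there. A convex function with $h(1) = 0$ and $h'(1) < 0$ dips below zero and then crosses zero exactly once at some $x^* > 1$, staying positive afterwards. Hence $h(x) \le 0$ forces $x \le x^*$.

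\textbf{Locating the root.} It remains to certify $x^* \le 2.5168$. Since $h$ is positive to the right of $x^*$, it suffices to check numerically that $h(2.5168) > 0$. Writing $u = 1.5168$:
\begin{itemize}
  \item $e^{u} \approx 4.5579$, so $2e^{u} \approx 9.1158$;
  \item $e^{-u} \approx 0.2194$;
  \item $x^2 \approx 6.3343$.
\end{itemize}
These give $h(2.5168) \approx 9.1158 + 0.2194 - 6.3343 - 3 \approx 0.0009 > 0$. The margin is small, so this step needs the most care. We would state explicit interval bounds on $e^{1.5168}$, for example a lower bound of $4.5578$ obtained from a truncated Taylor series, and confirm that the positivity survives rounding. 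That gives $x^* < 2.5168$ and hence $E_M \le 2.5168\,M$.
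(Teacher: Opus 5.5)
\textbf{Route.} Your argument is structurally sound and takes a slightly different route from the paper. The paper passes to logarithms and writes the constraint as $x \le g(x) = 1+\ln\bigl(\tfrac12(x^2+3-e^{1-x})\bigr)$. It then proves $g'(x)<1$ on $(1,\infty)$, which reduces to $2e^{1-x}<(x-1)^2+2$. So $x-g(x)$ is strictly increasing, and one evaluation at $2.5168$ suffices.

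You keep the exponential form and use convexity of your $h$ on $[1,\infty)$ instead. The two functions have the same sign: your $h$ satisfies $h(x) = 2u(x)\bigl(e^{x-g(x)}-1\bigr)$ with $u(x)=\tfrac12(x^2+3-e^{1-x})>0$. So both arguments certify the same threshold. The paper gets single-crossing from one derivative. Yours needs only two things. The first is $h''(x)=2e^{x-1}+e^{1-x}-2>0$ on $[1,\infty)$, which is immediate from $e^{x-1}\ge 1$, so $h'''$ is not needed. The second is $h(1)\le 0$. Then if $h(x)\le 0$ for some $x\ge 2.5168$, convexity gives $h\le 0$ on $[1,x]$, contradicting $h(2.5168)>0$.

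\textbf{Numerics.} The numerics, which are the real content of this lemma, are wrong as written.

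First, $h(1)=2+1-1-3=-1$, not $0$. This only helps you, but the claim is false.

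Second, and more seriously, $e^{1.5168}\approx 4.55762$, not $4.5579$; note $\ln 4.5578\approx 1.51684>1.5168$. So the certified lower bound $e^{1.5168}\ge 4.5578$ you propose is false, and that step would fail. The true margin is $h(2.5168)\approx 3.7\times 10^{-4}$, not $9\times 10^{-4}$.

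The conclusion still holds. Because $t\mapsto 2t+1/t$ is increasing for $t\ge 1$, a lower bound on $e^{1.5168}$ alone suffices, provided it is valid. For example, Taylor partial sums give $e^{1.5}\ge 4.4816$ and $e^{0.0168}\ge 1.01694$, hence $e^{1.5168}\ge 4.4816\cdot 1.01694\ge 4.5575$. Using $2.5168^2+3=9.33428224$ exactly, this yields $h(2.5168)\ge 2\cdot 4.5575+1/4.5575-9.33428224>10^{-4}$.
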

\begin{proof}
If $E_M \le M$, the bound holds trivially.
Assume $x \coloneqq E_M/M > 1$.

\paragraph{Combining the two bounds.}
Dividing the equivalent form of Lemma~\ref{lem:s2-lower} by~$M$ gives
\[
  x \;\le\; 1 + \ln\!\left(\frac{S_2/M^2 + 1}{2}\right).
\]
Since $\ln$ is increasing, substituting the upper bound
$S_2/M^2 \le x^2 + 2 - e^{1-x}$ from Lemma~\ref{lem:s2-upper}
into the right-hand side yields the self-contained inequality
\[
  x \;\le\; g(x)
  \;\coloneqq\;
  1 + \ln\!\left(\frac{x^2 + 3 - e^{1-x}}{2}\right).
\]

\paragraph{Showing $g'(x) < 1$ for $x > 1$.}
Writing $g(x) = 1 + \ln u(x)$ with
$u(x) = \tfrac{1}{2}(x^2 + 3 - e^{1-x})$,
the chain rule gives
\[
  g'(x) = \frac{u'(x)}{u(x)}
  = \frac{2x + e^{1-x}}{x^2 + 3 - e^{1-x}},
\]
where we used $u'(x) = \tfrac{1}{2}(2x + e^{1-x})$.
Now $g'(x) < 1$ is equivalent to
\[
  2x + e^{1-x} \;<\; x^2 + 3 - e^{1-x},
  \qquad\text{i.e.,}\qquad
  2e^{1-x} \;<\; x^2 - 2x + 3 = (x-1)^2 + 2.
\]
For $x > 1$ the exponent $1-x$ is negative, so
$e^{1-x} < 1$ and therefore
$2e^{1-x} < 2 \le (x-1)^2 + 2$.
This confirms $g'(x) < 1$ for all $x > 1$.

\paragraph{Concluding.}
Define $h(x) \coloneqq x - g(x)$.
Since $h'(x) = 1 - g'(x) > 0$ for $x > 1$,
the function~$h$ is strictly increasing on $(1,\infty)$.
Any valid~$x$ must satisfy $x \le g(x)$, i.e.\ $h(x) \le 0$.
Evaluating at $x = 2.5168$:
\[
  g(2.5168)
  = 1 + \ln\!\left(\frac{2.5168^2 + 3 - e^{1-2.5168}}{2}\right)
  \approx 1 + \ln(4.5622)
  \approx 2.51676,
\]
so $h(2.5168) \approx 2.5168 - 2.51676 = 0.00004 > 0$.
Since $h$ is strictly increasing and already positive at
$x = 2.5168$, no $x \ge 2.5168$ can satisfy $h(x) \le 0$.
We conclude $x \le 2.5168$, i.e.\ $E_M \le 2.5168\,M$.
\end{proof}

\section{Proof of the Main Theorem}

\begin{theorem}\label{thm:main}
For a single additive buyer with $n$ independent items, the optimal expected revenue satisfies $\OPT \le 3.52 \max\{\SREV, \BREV\}$.
\end{theorem}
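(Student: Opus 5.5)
The theorem follows by chaining the results already established, so the plan is short. First, Corollary~\ref{cor:reduction}, which combines Lemmas~\ref{lem:cdw}, \ref{lem:survival}, and~\ref{lem:truncate}, gives $\OPT \le \SREV + \mathbb{E}[V_M]$. Second, Lemma~\ref{lem:em-bound} gives $E_M = \mathbb{E}[V_M] \le 2.5168\,M$. Third, the preliminaries note $\SREV \le M$. Putting these together,
\[
  \OPT \le M + 2.5168\,M = 3.5168\,M \le 3.52\max\{\SREV,\BREV\}.
\]

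Before writing this up, I would check that every hypothesis used along the chain holds in the general setting. Lemma~\ref{lem:truncate} needs $K$ to be non-increasing, which holds because each $p_j$ is a survival function. It also needs $K(z)\le M/z$, which the preliminaries derive from $zp_j(z)\le \SRev_j$.

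Lemma~\ref{lem:s2-lower} uses the ceiling $\Pr(V_M>z)\le M/z$, and here an extra observation is needed. Since $V_M\le V$ pointwise, we have $z\Pr(V_M>z)\le z\Pr(V>z)\le \BREV\le M$.

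Lemma~\ref{lem:em-bound} splits into cases: when $E_M\le M$ the bound is trivial, and when $E_M>M$ Lemma~\ref{lem:s2-lower} applies.

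The one delicate point is degenerate cases, such as $M=0$ or $\SREV_j=0$ for some items. Items with $\SREV_j=0$ are almost surely zero, so they contribute nothing and can be discarded, which avoids division by $S_j$ in Lemma~\ref{lem:s2-upper}. The weights in the Jensen step there then remain well defined. If $M=0$, all values are almost surely zero and $\OPT=0$, so the claim is trivial.

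Beyond these checks there is no real obstacle. The substantive work is already in Lemma~\ref{lem:em-bound}, and its numerical evaluation at $x=2.5168$ only needs enough margin, roughly $4\times10^{-5}$, which I would confirm with a rigorous interval calculation of $g(2.5168)$.
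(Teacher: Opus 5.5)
Your proposal is correct and is exactly the paper's proof: it chains Corollary~\ref{cor:reduction} with Lemma~\ref{lem:em-bound} and uses $\SREV \le M$ to get $\OPT \le 3.5168\,M \le 3.52\max\{\SREV,\BREV\}$. Your side checks are all valid and slightly more careful than the paper's write-up: the bundle ceiling via $V_M \le V$, discarding items with $\SRev_j = 0$, the trivial case $M=0$, and the roughly $4\times 10^{-5}$ margin at $x = 2.5168$. The paper also misprints the value inside the logarithm there; it is about $4.5574$, not $4.5622$, which is consistent with its stated $g(2.5168)\approx 2.51676$.
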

\begin{proof}
By Corollary~\ref{cor:reduction} and Lemma~\ref{lem:em-bound},
\[
    \OPT \le \SREV + \mathbb{E}[V_M] \le M + 2.5168 M = 3.5168 M \le 3.52 \max\{\SREV, \BREV\}. \qedhere
\]
\end{proof}

\bibliographystyle{alpha}
\bibliography{ref}
\end{document}